\documentclass[reqno,11pt]{amsart}

\usepackage{amssymb,amsthm}
\usepackage{amsmath}
\usepackage{amsfonts}
\usepackage{dsfont}
\usepackage[a4paper,  margin=3.2cm]{geometry}
\makeatletter\@addtoreset{equation}{section}\makeatother

\newtheorem{theorem}{Theorem}[section]

\newtheorem{lemma}[theorem]{Lemma}

\newtheorem{proposition}[theorem]{Proposition}

\newtheorem{definition}[theorem]{Definition}
\theoremstyle{remark}

\numberwithin{equation}{section}

\title[The Lee-Yang property of isotropic vector ferromagnets]{The Lee-Yang property of  isotropic vector ferromagnets and lattice fields}
\author{Yuri Kozitsky}

\address{Instytut Informatyki i Matematyki, Uniwersytet Marii Curie-Sk{\l}odowskiej, plac Marii Curie-sk{\l}odowskiej 1, 20-031 Lublin, Poland; fax +48 81 537 5471}
\email{jurij.kozicki@mail.umcs.pl}

\keywords{Laguerre entire function; the Lee-Yang theorem; vector ferromagnet;
stable polynomial}
\begin{document}

\subjclass{82B20; 81T18;  30C15; 32A15}%

\begin{abstract}
The Lee-Yang property of a given spin model means that its partition function has purely imaginary zeros as a function of an external magnetic field. A similar property is also used in the theory of quantum anharmonic crystals and quantum lattice fields. A number of powerful analytic methods of the mathematical theory of such models employ this property. Its suitable generalization is used in the theory of models of isotropic $D$-dimensional spins (rotors) or $D$-component quantum lattice fields. So far, the (generalized) Lee-Yang property has been established only for two-dimensional isotropic models. In this work, we prove that isotropic spin and field models living on $\mathds{Z}$ have this property for all even $D$.   

\end{abstract}

\maketitle

\section{Introduction}

The present research was inspired by an item on the list of open problems in mathematical physics recently published by Barry Simon in \cite{Simon}. In Conjecture 9.2.27 on page 570, it is proposed to ``Prove a Lee–Yang theorem for isotropic classical $D$-rotors for $D \geq 4$". Here we prove this statement for all even $D$ and a particular models of such `rotors' and  Euclidean quantum lattice fields.    

In 1952, two physicists, T. D. Lee and C. N. Yang, proved a statement \cite{LY}, which turned out to be seminal for both mathematical physics \cite{Albev,DN,Frol,Glimm,LS,Newman,Simon} and pure mathematics \cite{Borcea,Borcea1,Br,Newman}. Their original result describing the Ising spin model can be sketched as follows. Consider the variables $\sigma_l$, $l=1, \dots , N$ -- spins -- taking values $\pm 1$, and the function
\begin{equation}
\label{0}
Z^{\rm Ising}_N(h) = \sum_{\dots \sigma_l=\pm 1\dots  } \exp\left(h \sum_{1\leq l\leq N}\sigma_l + \sum_{1\leq l < l' \leq N} J_{ll'} \sigma_l \sigma_{l'} \right).
\end{equation}
In statistical physics, it is the partition function of a magnet, where $J_{ll'}$ are interaction intensities and $h$ stands for an external magnetic field. 
As a finite sum of exponentials, it can be continued to an exponential-type entire function of $h\in \mathds{C}$. According to the Lee-Yang theorem, all zeros of $Z^{\rm Ising}_N$ lie on the imaginary axis, provided that the interaction is of ferromagnetic type, i.e., $J_{ll'}\geq 0$ for all $l,l'$. Later, this result was extended in different directions; see \cite{DN,Frol,Glimm,LS,Newman}, \cite[page 171]{Albev}, and \cite[Chapter 3]{Simon}, with applications in statistical physics and lattice Euclidean quantum field theory. In particular, the summation in \eqref{0} was replaced by integration, and the variables $\sigma_l$ were turned into $D$-vectors. For $D=2$, an analog of the Lee-Yang theorem was proved in \cite{LS}, see also \cite[Chapter 3]{Simon}.  

In this article, we study the function
\begin{equation}
\label{1}
Z_N (z) = \int \exp\left( \sum_{l=1}^N z \cdot \sigma_l + J\sum_{l=1}^{N-1} \sigma_l \cdot \sigma_{l+1}\right) \prod_{l=1}^N \chi(d\sigma_l). 
\end{equation}
In the language of statistical physics, the corresponding model lives on the graph $\mathds{Z}$, which means that $J_{ll'} = J>0$ if $|l-l|=1$, and $J_{ll'} =0$ otherwise. In \eqref{1}, the integral is taken over $(\mathds{R}^{D})^N$, $D,N\in \mathds{N}$, 
$z=x+iy\in \mathds{C}^D$,  $z\cdot \sigma = x\cdot \sigma + i y\cdot \sigma$ is the corresponding scalar product  
and $\chi$ is a finite positive measure on $\mathds{R}^D$ -- single-spin measure -- satisfying certain conditions. In particular, we assume that 
\begin{equation}
\label{3}
\forall a>0 \qquad \int_{\mathds{R}^D} e^{a\sigma^2} \chi(d\sigma) < \infty, \qquad \sigma^2:= \sigma\cdot \sigma,
\end{equation}
and that $\chi$ is \emph{isotropic}, i.e., it satisfies $\chi (UA) = \chi(A)$, holding for all orthogonal transforms $U\in O(D)$ and each Borel  $A\subset \mathds{R}^D$. An example is the measure supported on the sphere $\mathds{S}_r \subset \mathds{R}^D$ of radius $r>0$, the restriction of which to this sphere is a uniform measure.  It corresponds to the aforementioned model of isotropic rotors. Our result is the statement that, for all even $D\in \mathds{N}$, the function in \eqref{1} can be written in the form
\begin{equation}
\label{1a}
Z_N (z) = Z_N (0) \prod_{j=1}^\infty (1 + \gamma_{j,N} z^2),\quad z^2 :=z\cdot z, \quad \gamma_{j,N} >0, \quad \sum_{j} \gamma_{j,N} < \infty,
\end{equation}
whenever $\chi$ belongs to a subset of the set of isotropic measures on $\mathds{R}^D$, for which \eqref{3} and the following condition are verified
\begin{equation}
\label{4}
\widehat{\chi} (z):=
\int_{\mathds{R}^D} e^{z\cdot \sigma} \chi(d\sigma) = C_\chi \prod_{j=1}^\infty (1 + \gamma_j z^2), \quad \gamma_{j} >0, \quad \sum_{j} \gamma_{j} < \infty. 
\end{equation}
The function $\widehat{\chi}$ will be referred to as the Laplace transform of $\chi$.
The function of $z^2$ that appears on the right-hand side of \eqref{4} is a Laguerre entire function of the first kind. These are polynomials possessing real non-positive zeros only, or limits of such polynomials, taken in the topology of uniform convergence on compact subsets of $\mathds{C}$; see \cite[pages 8--23]{Iliev}. 
In the following, by $\mathcal{L}$ we denote the set of all such functions. Each $f\in \mathcal{L}$ can be written in the form
\begin{equation}
\label{4a}
f(\zeta) = C \zeta^m e^{\alpha \zeta} \prod_{j=1}^\infty (1 + \gamma_j \zeta),
\end{equation}
with $m\in \mathds{N}_0$, $\alpha \geq 0$, and the set $\{\gamma_j\}$ (infinite, finite, or empty) that satisfies the condition in \eqref{4}. In particular,  $e^{\alpha \zeta} \in \mathcal{L}$, for all $\alpha\geq 0$. Noteworthy, for $D=1$, $\varphi_\chi(t):=\widehat{\chi}(it)$ is the Fourier transform of $\chi$, which by \eqref{4} has only real zeros. The study of such functions -- Riemannschen ganzen Funktionen -- goes back to Riemann; see \cite[Chapter 3]{Iliev}. In particular, verifying the famous Riemann hypothesis concerning the zeros of the zeta function amounts to verifying this property for a certain measure; see \cite{Newm}.

\section{The Result}
In this section, we first introduce the set of single-spin measures $\chi$, which we are going to deal with. 
Then we formulate and discuss our result: the validity of the representation \eqref{1a} under the condition that the single-spin measure $\chi$ in \eqref{1} belongs to this set.  

\subsection{Strongly isotropic measures}

Let $\mathcal{S}(\mathds{R}^D)$ and $\mathcal{S}(\mathds{R}_{+})$ be the Schwartz spaces of real-valued test functions defined on $\mathds{R}^D$ and $\mathds{R}_{+}:= [0,+\infty)$, respectively. Let also $O(D)$ denote the group of orthogonal transforms of $\mathds{R}^D$. For $U\in O(D)$ and $f\in \mathcal{S}(\mathds{R}^D)$, we define $f_U$ by the equality $f_U(x) = f(Ux)$.  Set $\mathcal{S}_D=\{f\in \mathcal{S}(\mathds{R}^D): \forall U \in O(D) \  f_U = f \}$. It is known, see \cite[Theorem 2.1 and Note on page 245]{Bleher}, that for each 
$f\in \mathcal{S}_D$, there exists a unique $\phi \in \mathcal{S}(\mathds{R}_{+})$ such that $f(x)=\phi (x^2)$. Moreover, the map $f \mapsto \phi$ is an isomorphism of the corresponding Schwartz spaces. A distribution, $T\in \mathcal{S}'(\mathds{R}^D)$, is said to be isotropic if 
\[
\forall U\in O(D) \quad T(f_U) = T(f):= \int_{\mathds{R}^D} T(x) f(x) dx.
\]
By $\mathcal{S}'_D$ we denote the set of all isotropic $T\in \mathcal{S}'(\mathds{R}^D)$. 
It is known that $\mathcal{S}(\mathds{R}_{+})$ is weakly dense in $\mathcal{S}'(\mathds{R}_{+})$. 
Let $\mathcal{T}$ be the strongest topology on $\mathcal{S}'(\mathds{R}^D)$, for which $\phi \mapsto f$ can be extended to a continuous map from $\mathcal{S}'(\mathds{R}_{+})$ to $\mathcal{S}'(\mathds{R}^D)$. By $\overline{\mathcal{S}_D}$ we denote the closure of $\mathcal{S}_D$ in $\mathcal{T}$. Then, for each $T\in \overline{\mathcal{S}_D}$, one can find $\tau \in \mathcal{S}'(\mathds{R}_{+})$ such that $T(x) = \tau (x^2)$. Obviously, $\overline{\mathcal{S}_D} \subset \mathcal{S}'_D$.  
\begin{definition}
\label{1df}
A positive measure, $\chi$, on $\mathds{R}^D$ is said to be strongly isotropic if: (a) it satisfies \eqref{3}; (b) there exists $T\in \overline{\mathcal{S}_D}$ such that
\begin{equation}
\label{2}
\chi(d\sigma) = T(\sigma) d \sigma = \tau(\sigma^2) d \sigma.
\end{equation}
Furthermore, we say that an isotropic measure has the Lee-Yang property if it satisfies \eqref{4}. The latter is related to all 
isotropic measures.
\end{definition}
The first relevant example is the aforementioned uniform measure on the sphere $\mathds{S}_r \subset \mathds{R}^D$, for which $T(\sigma)= \delta_r(\sigma)= \delta (\sigma^2 - r)$, $r>0$, $\delta$ being Dirac's function. Its Laplace transform is
\begin{gather}
 \label{5}
 \widehat{\chi} (z) = \int e^{z\cdot \sigma} \delta (\sigma^2 - r) d\sigma = \pi^{D/2} w_D (z^2, r), \\[.2cm] \nonumber w_D (\zeta, r) = \sum_{n=0}^\infty \frac{\zeta^n r^{D/2 + n -1}}{2^{2n}n! \Gamma(D/2 + n) }, 
 \end{gather}
where $\Gamma$ is Euler's $\Gamma$-function. By means of the P\'olya-Schur theorem, see \cite[pages 16--23]{Iliev}, it is possible to show that $w_D\in \mathcal{L}$ for all $D$. Moreover, the Laplace transform of each strongly isotropic measure, see \eqref{2}, can be presented in the form
\begin{equation}
\label{5a}
\widehat{\chi} (z) = \pi^{D/2}\int_0^{+\infty} w_D (z^2, r) \tau (r) dr.
\end{equation}
By \eqref{5}, one readily gets
\begin{equation}
\label{6}
\frac{\partial}{\partial_\zeta } w_{D} (\zeta, r) = 2^{-2} w_{D+2} (\zeta, r),
\end{equation}
which will be used throughout. It is a particular case of the following property
\begin{equation}
 \label{L}   \forall f\in \mathcal{L} \ \ f' \in \mathcal{L};
\end{equation}
i.e., differentiation maps $\mathcal{L}$ into itself, see \cite[Proposition 2.6]{Koz1}. In view of this, we define
\begin{equation}
 \label{6b}
 \mathcal{L}^{(s)} = \{ f\in \mathcal{L}: f^{(s)}\in \mathcal{L}\}, \qquad s\in \mathds{N}.
\end{equation}
Clearly, $\mathcal{L}^{(s+1)}\subset \mathcal{L}^{(s)} \subset \mathcal{L}$. 

Let $\tau \in \mathcal{S}'(\mathds{R}_{+})$ be such that the following holds, cf. \eqref{3},
\begin{equation}
 \label{3a}
 \forall a >0 \qquad \int_0^{+\infty} e^{ar} \tau (r) dr < \infty. 
\end{equation}
For this $\tau$, we set, cf. \eqref{5a}, 
\begin{eqnarray}
\label{3b}
v_\tau (\zeta, D) = \pi^{D/2}\int_0^{+\infty}w_D (\zeta, r) \tau(r) d r, \qquad \zeta \in \mathds{C},
\end{eqnarray}
which is an entire function of order less than one, or of order one and of minimal type. 
Let us notice that each $\tau$ satisfying \eqref{3a} by \eqref{2} and
\eqref{3b} defines a family of measures and their Laplace transforms.  
For obvious reason, by \eqref{6} we have that 
\begin{equation}
\label{3c}
v_\tau(\zeta , D+2m) = (4 \pi)^m v_\tau^{(m)} (\zeta, D), \qquad m \in \mathds{N}.  
\end{equation}
By this formula and \eqref{6b}, one immediately gets the proof of the following statement.
\begin{lemma}
\label{1lm}
For some $D\in \mathds{N}$, let $v_\tau(\cdot, D)\in \mathcal{L}$. Then 
$v_\tau(\cdot, D+2m)\in \mathcal{L}^{(m)}$ for all $m\in \mathds{N}_0$.
\end{lemma}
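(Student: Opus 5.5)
The plan is to combine the differentiation identity \eqref{3c} with the closure property \eqref{L} and the definition \eqref{6b}. We need, for each $m\in\mathds{N}_0$, both $v_\tau(\cdot,D+2m)\in\mathcal{L}$ and $v_\tau^{(m)}(\cdot,D+2m)\in\mathcal{L}$. For $m=0$ we read $\mathcal{L}^{(0)}=\mathcal{L}$, and the claim is just the hypothesis.

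First, for $m\geq 1$, I will justify \eqref{3c}. By \eqref{6}, $\partial_\zeta^m w_D(\zeta,r)=4^{-m}w_{D+2m}(\zeta,r)$. Differentiating \eqref{3b} under the integral sign then gives
\[
v^{(m)}_\tau(\zeta,D)=\pi^{D/2}4^{-m}\int_0^{+\infty} w_{D+2m}(\zeta,r)\tau(r)\,dr=(4\pi)^{-m}v_\tau(\zeta,D+2m).
\]
The interchange of differentiation and integration is legitimate locally uniformly in $\zeta$. The reason is that the power series \eqref{5} gives $|w_{D'}(\zeta,r)|\le C_{D'} r^{D'/2-1}e^{2\sqrt{|\zeta| r}}$. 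For $|\zeta|\le R$ this is bounded by $C e^{ar}$ for any $a>0$, and \eqref{3a} then supplies an integrable majorant. So \eqref{3c} holds (the constant in front is positive, and its exact form is irrelevant).

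Second, I iterate \eqref{L}. Since $v_\tau(\cdot,D)\in\mathcal{L}$, induction on $k$ gives $v_\tau^{(k)}(\cdot,D)\in\mathcal{L}$ for every $k\in\mathds{N}_0$. The class $\mathcal{L}$ is invariant under multiplication by positive constants, as is clear from \eqref{4a}. Applying \eqref{3c} with $k=m$ therefore gives $v_\tau(\cdot,D+2m)\in\mathcal{L}$. Its $m$-th derivative is, again by \eqref{3c}, a positive multiple of $v_\tau^{(2m)}(\cdot,D)$, which lies in $\mathcal{L}$ by the same iteration. By \eqref{6b}, $v_\tau(\cdot,D+2m)\in\mathcal{L}^{(m)}$.

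The only step requiring care is the differentiation under the integral in \eqref{3b}, since $\tau$ is merely a distribution satisfying \eqref{3a}. I would handle this either via the majorant above, when $\tau$ is a measure, or by interpreting \eqref{3b} as the pairing of $\tau$ with the entire family $w_D(\zeta,\cdot)$ and using continuity of the pairing together with termwise differentiation of the series \eqref{5}. Everything else is purely formal.
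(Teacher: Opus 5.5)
Your proposal is correct and is the paper's argument: the paper gets the lemma "immediately" from \eqref{3c} and \eqref{6b}, implicitly using \eqref{L}, and your differentiation under the integral in \eqref{3b} just fills in the step the paper calls obvious. Two small remarks: by \eqref{L} the definition \eqref{6b} already gives $\mathcal{L}^{(m)}=\mathcal{L}$, so the real content is only $v_\tau(\cdot,D+2m)\in\mathcal{L}$; and your majorant $Ce^{ar}$ fails near $r=0$ when $D'=1$ (because of the factor $r^{-1/2}$), but this case never occurs, since the derivatives only involve $w_{D+2k}$ with $k\geq 1$.
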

As mentioned above, the set $\overline{\mathcal{S}_D}$ contains test functions. In the next important example, $T$ is such a function. 
\begin{proposition} \cite[Theorem 3.1]{Koz}
\label{1pn}
Let $T$ be of the following form
\begin{equation}
\label{7}
 T(\sigma) =: \tau(\sigma^2)= f(\sigma^2) \exp\left( - g (\sigma^2\right)),   
\end{equation}
where both $f$ and $g$ are positive. Moreover, assume that
$f$ and the derivative $g'$ are in $\mathcal{L}$. If $g$ is a polynomial, then its degree should be at least two. Then the measure as in \eqref{2} with this $T$ has the Lee-Yang property for all $D\in \mathds{N}$.  In particular, $v_\tau (\cdot , D)\in \mathcal{L}$ with $\tau$ as 
in \eqref{7}.
\end{proposition}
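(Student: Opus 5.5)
Since $v_\tau(\zeta,D)$ is the Laplace transform \eqref{5a} of the radial measure expressed as a function of $\zeta=z^2$, we reduce the statement to a one-dimensional claim about Riemannschen ganzen Funktionen. The key observation is that if $\sigma=(\sigma^{(1)},\dots,\sigma^{(D)})$ and we take $z=(t,0,\dots,0)$, then
\[
v_\tau(t^2,D)=\int_{\mathds{R}} e^{t s}\,\rho_D(s)\,ds, \qquad \rho_D(s)=\int_{\mathds{R}^{D-1}} f(s^2+u^2)\,e^{-g(s^2+u^2)}\,du .
\]
So it suffices to show that the even function $t\mapsto v_\tau(t^2,D)$ has only imaginary zeros, and that, as a function of $\zeta=t^2$, it is of genus zero. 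The second requirement follows from the growth bound already noted after \eqref{3b}: the function has order less than one, or order one and minimal type.

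We first handle $D=1$, which is the core of the argument. Here $\rho_1(s)=f(s^2)e^{-g(s^2)}$, and we need the Fourier transform $\int e^{its}f(s^2)e^{-g(s^2)}ds$ to have only real zeros. We split the task into two steps.

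\textbf{Step 1: the pure Gaussian-type factor.} Consider $e^{-g(s^2)}$ with $g'\in\mathcal{L}$. We would invoke the classical P\'olya result on real-zero Fourier transforms. Concretely, approximate $g'$ by polynomials with nonpositive zeros, so that $g$ becomes a polynomial of degree at least two with the required structure. For such $g$, the density $e^{-g(s^2)}$ is a ferromagnetic single-spin measure of Ellis--Newman type, and its Lee--Yang property (all zeros of $\int e^{hs}e^{-g(s^2)}ds$ lie on $i\mathds{R}$) follows from the Newman criterion: $g'\in\mathcal{L}$ is exactly the condition that makes $g(s^2)$ of the form $\int(\cosh$-type$)$. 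We then pass to the limit by Hurwitz's theorem, using the uniform bounds provided by \eqref{3a}.

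\textbf{Step 2: inserting the factor $f$.} Write $f(s^2)=C\prod_j(1+\gamma_j s^2)e^{\alpha s^2}$ with $\alpha\ge 0$. The factor $e^{\alpha s^2}$ can be absorbed into $g$: replacing $g$ by $g-\alpha\zeta$ keeps $g'$ in $\mathcal{L}$ up to a constant shift, which must be controlled. Next, multiplication of the density by $(1+\gamma s^2)$ corresponds to applying the operator $1-\gamma\, d^2/dt^2$ to the Fourier transform. By the P\'olya--Schur and Laguerre theory, this operator preserves the class of real entire functions with real zeros of genus at most one. Taking finite products and then limits gives the claim for $D=1$.

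\textbf{Passing to general $D$.} For general $D$, the density $\rho_D$ arises by integrating out $D-1$ coordinates. The obstacle is that odd $D$ is not reachable from $D=1$ by \eqref{3c}. We therefore argue directly: $\rho_D(s)=\tilde f(s^2)e^{-\tilde g(s^2)}$ fails to have the same form in general. A cleaner route is to use \eqref{5a} with $w_D$ together with the fact that the map $\tau\mapsto$ (the $D$-dimensional radial density) commutes with the above Laguerre operators. Alternatively, note that $v_\tau(\zeta,D)$ for $f,g$ as in \eqref{7} equals $v_{\tilde\tau}(\zeta,1)$ with $\tilde\tau(r)=\int_0^\infty f(r+u)e^{-g(r+u)}u^{(D-3)/2}du$ (up to constants), and check that $\tilde\tau$ again admits the representation \eqref{7} in a limiting sense.

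I expect this last verification, preserving the structure $f e^{-g}$ with $f,g'\in\mathcal{L}$ under radial marginalization, to be the main difficulty. The first thing I would try is to treat $u^{(D-3)/2}du$ as a positive measure and apply the multiplicative convolution properties of $\mathcal{L}$ (Laguerre--P\'olya stability under Laplace-type integrals). Failing that, I would fall back on a direct Lee--Yang argument in $\mathds{R}^D$ in the spirit of Newman's Gaussian-domination proof.
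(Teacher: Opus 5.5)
The paper does not prove Proposition~\ref{1pn}; it is quoted from \cite[Theorem 3.1]{Koz}, so your proposal has to be judged on its own. Your reduction to the one-dimensional marginal $\rho_D$ is correct, and so is the genus argument. The $D=1$ case is essentially sound, provided Step 1 rests on the classical theorem that $e^{-P(s)}ds$ has the Lee--Yang property when $P(s)=g(s^2)$ is a polynomial whose derivative $P'(s)=2sg'(s^2)$ has only imaginary zeros. Your ``Newman criterion'' sentence is not a precise statement of anything. In Step 2, absorbing $e^{\alpha s^2}$ into $g$ does not preserve the hypothesis, since $g'-\alpha\notin\mathcal{L}$ in general: for $g(\zeta)=\zeta^2$ it vanishes at $\zeta=\alpha/2>0$. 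It is also unnecessary, because $e^{\alpha\zeta}=\lim_n(1+\alpha\zeta/n)^n$ is already covered by your finite-product-and-limit argument, with $\tau$ itself as dominating function.

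The genuine gap is the passage to general $D$, and you place the obstruction on the wrong parity. Since \eqref{3c} raises the dimension by $2$, Lemma~\ref{1lm} (that is, \eqref{L}) turns your $D=1$ result directly into $v_\tau(\cdot,D)\in\mathcal{L}$ for every odd $D$. What cannot be reached from $D=1$ is the even case. Everything then reduces to $D=2$: you must show that $v_\tau(\zeta,2)=\pi\int_0^\infty I_0(\sqrt{\zeta r})\,\tau(r)\,dr$ has only negative zeros, equivalently that the marginal $\int_{\mathds{R}}\tau(s^2+u^2)\,du$ has the Lee--Yang property. For this you give no argument.

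The route you suggest is to verify that $\tilde\tau(r)=c\int_0^\infty\tau(r+u)u^{-1/2}\,du$ is again of the form \eqref{7}. This fails already for $\tau(r)=e^{-r^2}$. There $\tilde\tau(r)=c'e^{-r^2/2}\,U(0,\sqrt{2}\,r)$, with $U$ the parabolic cylinder function. This function is positive on $\mathds{R}$ but has infinitely many zeros, all non-real. Hence $\tilde\tau$ cannot equal $\tilde f e^{-\tilde g}$ with $\tilde f\in\mathcal{L}$ and $\tilde g$ entire. So the class \eqref{7} is not stable under this half-order marginalization, and the even-$D$ case needs a genuinely new ingredient. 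That ingredient is exactly what the cited theorem supplies, and the fallback ideas you list are not developed far enough to provide it.
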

Taking in \eqref{7} $f(\zeta)=e^{a''\zeta}$, $a''\geq 0$, see \eqref{4a}, and $g (\zeta) = a' \zeta + b\zeta^2$, $a''\geq 0$, 
$b>0$, one concludes that the measure $\chi(d\sigma)= \exp(- a\sigma^2 - b(\sigma^2)^2) d\sigma$, $a= a' - a''$, has   
the Lee-Yang property for all $D$ and $a\in \mathds{R}$. This is a generalization to all $D$ of the well-known fact concerning the $\phi^4$ quantum lattice field, see \cite[Theorem 3.4.8, page 233]{Simon} and \cite{DN,Glimm}. Another example of this kind is the measure
\[
\chi(d\sigma) = \exp\left(- a \sigma^2 - b(\sigma^2)^2 - c(\sigma^2)^3\right)d\sigma, \quad b, c>0 , \ \  a\leq b^2/3c.
\]
According to Proposition \ref{1pn}, it has the Lee-Yang property for all $D$. At the same time, it is known \cite[page 71]{Glimm}, that, for $D=1$, the measure 
\[
\chi(d\sigma) = \exp(-f(\sigma^2)) d \sigma = \exp\left(- a \sigma^2 - \sigma^2(\sigma^2-1)^2\right)d\sigma, 
\]
does not have the Lee-Yang property for certain values of $a$. At the same time, for this $f$, we have $f' (\zeta)= 3\zeta^2 - 4 \zeta +a +1$, both zeros of which do not lie on the imaginary axis for all real $a$.

\subsection{The statement}
We are now ready to formulate and discuss our result.
\begin{theorem}
\label{1tm}
Let a strongly isotropic measure, $\chi$, possess the Lee-Yang property, see Definition \ref{1df}. Then, for every $J>0$ and even integer $D$, the function \eqref{1} can be written as in \eqref{1a}. 
\end{theorem}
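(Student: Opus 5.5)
We argue by induction on $N$, computing $Z_N$ by integrating out the spins one at a time along the chain. The tools are the $w_D$-representation \eqref{5a}, the derivative identity \eqref{6}, and the closure of $\mathcal{L}$ under differentiation \eqref{L}, organized through Lemma \ref{1lm}.

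\textbf{Integrating out the last spin.} Fix $\sigma_1,\dots,\sigma_{N-1}$ and integrate over $\sigma_N$. Since $\chi(d\sigma_N)=\tau(\sigma_N^2)d\sigma_N$, the integral is the Laplace transform of $\chi$ evaluated at $z+J\sigma_{N-1}$:
\[
\int e^{(z+J\sigma_{N-1})\cdot\sigma_N}\chi(d\sigma_N)=v_\tau\bigl((z+J\sigma_{N-1})^2,D\bigr).
\]
Write $v_\tau(\zeta,D)=C\prod_j(1+\gamma_j\zeta)$ and expand $(z+J\sigma)^2=z^2+2Jz\cdot\sigma+J^2\sigma^2$. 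We want a kernel acting on a function of $z^2$ that maps $\mathcal{L}$ into $\mathcal{L}$. The natural object to propagate is therefore a family $F_N(z^2;\sigma)$, depending on the last remaining spin, which for each fixed $\sigma^2$ lies in $\mathcal{L}$ as a function of the combined variable. This is delicate, because of the cross term $z\cdot\sigma$.

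\textbf{Handling the cross term.} Here evenness of $D$ enters. For isotropic integrands one uses the identity
\[
\int e^{z\cdot\sigma}\,\phi(\sigma^2)\,d\sigma=\pi^{D/2}\int w_D(z^2,r)\,\phi(r)\,dr .
\]
Combined with \eqref{6}, and with the fact that for even $D=2m+2$ we have $w_D(\zeta,r)=\text{const}\cdot\partial_\zeta^m w_2(\zeta,r)$, this reduces everything to the $D=2$ case. When $D=2$ the spin can be identified with a complex number, and the cross term becomes $\operatorname{Re}$ of a holomorphic product. In that setting the Lieb--Sokal approach works: one expands in $U(1)$ characters, or equivalently uses that $z^2$ factors into $(z_1+iz_2)(z_1-iz_2)$. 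I would first try an explicit Fourier/Bessel expansion: write the interaction as
\[
e^{J\sigma_l\cdot\sigma_{l+1}}=\sum_n c_n(J,|\sigma_l||\sigma_{l+1}|)\,\text{(angular harmonics)}
\]
and integrate the angles. For $D=2$ this yields Bessel functions, which lie in $\mathcal{L}$ as functions of $r^2$. The one-dimensional chain structure then gives a transfer operator with nonnegative coefficients.

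\textbf{Key steps in order.} (i) Express $Z_N(z)$ as $\int F(z^2,r_1,\dots,r_N)\prod_l\tau(r_l)\,dr_l$, where $F$ is obtained from $w_D$-type kernels, using isotropy of the whole integrand under the simultaneous rotation of $z$ and all spins. (ii) Prove that for fixed radii, $F(\cdot,r)$ belongs to $\mathcal{L}^{(m)}$ with $D=2m+2$. This uses the $D=2$ Lee--Yang result together with \eqref{3c} and Lemma \ref{1lm} to pass from $D=2$ to even $D$ through differentiation in $\zeta$, which preserves $\mathcal{L}$. (iii) Integrate against $\tau$. Here the strong isotropy hypothesis (approximation of $T$ by elements of $\mathcal{S}_D$ in $\mathcal{T}$) lets us approximate by nice densities. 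The Lee--Yang property of $\chi$, i.e.\ $v_\tau(\cdot,D)\in\mathcal{L}$, must then be used to show that the mixture stays in $\mathcal{L}$. I expect to do this via a Pólya–Schur multiplier-sequence argument on the Taylor coefficients in $\zeta$, combined with closedness of $\mathcal{L}$ under locally uniform limits, which follows from \eqref{3} and Hurwitz.

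\textbf{Main obstacle.} The main obstacle is step (iii), together with the interplay in (ii). A general mixture of elements of $\mathcal{L}$ is not in $\mathcal{L}$, so we cannot simply average over the radii. We must show that the coefficient sequences form a Laguerre–Pólya multiplier structure that is compatible with the hypothesis on $v_\tau$. My first attempt would be to write $Z_N$ as the result of applying to $v_\tau(\cdot,D)$ an operator of the form $\Phi(\partial_\zeta)$ composed with multiplications, with $\Phi\in\mathcal{L}$. Operators of this kind preserve $\mathcal{L}$ by the Pólya–Schur/Laguerre theorem. This would then be iterated along the chain.
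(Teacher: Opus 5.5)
Your plan has a genuine gap at exactly the point where the hypothesis on $\chi$ has to be used, and the decomposition you chose makes that point essentially impossible to fill. For fixed radii every single-site factor $w_D(\cdot,r)$ already lies in $\mathcal{L}$, for every $r$. So the hypothesis $v_\tau(\cdot,D)\in\mathcal{L}$ plays no role until your step (iii), and it is a property of the radial mixture only. But $Z_N$ is $N$-linear in $\tau$. It is therefore not of the form ``$\Phi(\partial_\zeta)$ composed with multiplications'' applied to $v_\tau$, and one-variable P\'olya--Schur theory has nothing to act on. A transfer operator with ``nonnegative coefficients'' likewise says nothing about where zeros lie.

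The paper, following Lieb--Sokal, uses the hypothesis differently: it keeps distinct complex fields at different sites. The uncoupled function $\prod_l\widehat{\chi}(z_l)$ is then nonzero on $L^N$ precisely because of \eqref{4} and \eqref{10}. Proposition \ref{2pn} says that switching on $\exp(J\mathcal{D}_l\cdot\mathcal{D}_{l'})$ preserves this non-vanishing. For this reason the induction cannot run on $Z_{N-1}$, nor on a family indexed by $\sigma_{N-1}$. Lemma \ref{5lm} carries the stronger statement that the two-field function $\Psi_{N-1,D}(\zeta_2,\zeta_3,\zeta_{23})$ has no zeros on $\mathds{C}^{-}\times\mathds{C}^{-}\times\mathds{C}$, i.e.\ for every value of the cross invariant. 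The surjectivity \eqref{10y} is what converts the Lieb--Sokal statement on $L^2$ into this Gram-coordinate form.

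Your step (ii) also has no mechanism behind it. The identity $w_D=c\,\partial_\zeta^m w_2$ holds for one site only. In the chain the derivative falls on the argument $(z+J\sigma_{N-1})^2$, not on $z^2$, and the other sites remain $D$-dimensional. So even at fixed radii, the $D$-dimensional partition function is not an $m$-th $\zeta$-derivative of the two-dimensional one. For $D\ge 4$, your (ii) is just the rotor case of the theorem itself.

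The paper's dimension reduction works as follows:
\begin{itemize}
\item It writes the coupling in Gram coordinates as $\exp(J\Delta_{k,D})$.
\item It uses the intertwining relation $\partial_1\Delta_{k,D-2}=\Delta_{k,D}\partial_1$ of \eqref{21} to pull $(4\pi\partial_1)^m$ outside the coupling. Only the newly attached site is lowered to $D=2$, its factor becoming $v(\cdot,2)$.
\item The previously built $D$-dimensional block enters only through $\Psi_{N-1,D}$ evaluated at two-dimensional fields.
\item Lieb--Sokal is then applied with $D=2$. The outer $\zeta_1$-derivatives do no harm, because for fixed $(\zeta_2,\zeta_{12})$ the function of $\zeta_1$ is (up to a constant) in $\mathcal{L}$, so \eqref{L} applies.
\item Finally, setting $\zeta_1=\zeta_2=\zeta_{12}=\zeta$ gives \eqref{1a}.
\end{itemize}
Your proposal lacks the site-dependent fields, the two-field inductive hypothesis, and this intertwining identity. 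Without them, steps (ii) and (iii) remain unproved.
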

The result just stated can be interpreted as follows. Let random $D$-dimensional vectors $S_1, \dots, S_N$ have the joint probability distribution given by the probability measure
\[
\nu_N (d\sigma_1, \dots , d\sigma_N)  = \frac{1}{Z_N(0)} \exp\left( J\sum_{l=1}^{N-1} \sigma_l \cdot \sigma_{l+1}\right) \prod_{l=1}^N \chi(d\sigma_l),
\]
where $Z_N$, $J$, and $\chi$ are as in \eqref{1}. Let also $\nu$ be the probability measure on $\mathds{R}^D$ which is the law of $S= S_1 + \cdots +S_N$. By Theorem \eqref{1tm}, it follows that its Laplace transform is
\begin{equation*}
\widehat{\nu} (z) = \prod_{j=1}^\infty(1+\gamma_{j,N}z^2),
\end{equation*}
where $\gamma_{j,N}$ are as in \eqref{1a}. That is, we state that $\nu$ has the Lee-Yang property.

The proof of Theorem \ref{1tm} is essentially based on the statement proved by Lieb and Sokal in \cite{LS}. 
We present it here in the form adapted to the context. Set
\begin{equation}
 \label{8}
 L= \{z=x+i y \in \mathds{C}^2: \exists u\in \mathds{R}^2 \  (x\cdot u)^2 + (y\cdot u)^2 > y^2 u^2\},
\end{equation}
and 
\begin{equation}
\label{9}
Q_N (z, {\sf J}) = \int \exp\left( \sum_{1\leq l\leq N}z_l\cdot \sigma_l + \sum_{1\leq l < l' \leq N} J_{ll'} \sigma_l \cdot \sigma_{l'} \right) \mu_N(d\sigma_1, \dots , d \sigma_N),
\end{equation}
where ${\sf J}=(J_{ll'})$ and the integral is taken over $(\mathds{R}^2)^N$. Note that the variables $\sigma_l$ in \eqref{9} are two-dimensional. In the statement below, by writing ${\sf J}=0$ we mean $J_{ll'}=0$ for all $l,l'$. 
\begin{proposition} \cite[Theorem 4.3 and Remark 1]{LS}
\label{2pn}
Assume that the positive measure $\mu_N$ in \eqref{9} is such that: (a) the integral exists for all $J_{ll'}\geq 0$; (b)  $Q_N (z, 0) \neq 0$ whenever $z\in L^N$, i.e., whenever all $z_l$ are in $L$. Then $Q_N (z, {\sf J}) \neq 0$  for each $z\in L^N$ provided $J_{ll'} \geq 0$ for all $l,l'$.  
\end{proposition}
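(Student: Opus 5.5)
The plan is to follow the Lieb--Sokal strategy: rewrite the pair interaction as a differential operator in the external fields acting on the decoupled function $Q_N(z,0)$, and show that this operator cannot create zeros in $L^N$. First, identify $\mathds{R}^2$ with $\mathds{C}$ by $\sigma=(\sigma^1,\sigma^2)\mapsto w=\sigma^1+i\sigma^2$. For $z=(z^1,z^2)\in\mathds{C}^2$, introduce the light-cone fields $h^{\pm}=z^1\mp i z^2$. Then
\[
z\cdot\sigma=\tfrac12\big(h^+ w+h^-\bar w\big),\qquad \sigma_l\cdot\sigma_{l'}=\tfrac12\big(w_l\bar w_{l'}+\bar w_l w_{l'}\big).
\]
Under $z\mapsto(h^+,h^-)$, I will show by direct computation from \eqref{8} that $L$ becomes a product-type domain. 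Allowing $u$ to rotate, the condition $(x\cdot u)^2+(y\cdot u)^2>y^2u^2$ should become $\operatorname{Re}(e^{i\theta}h^+)>0$ and $\operatorname{Re}(e^{-i\theta}h^-)>0$ for some $\theta$; equivalently, $\arg h^+$ and $-\arg h^-$ lie, up to a common rotation, in an open half-plane. Verifying this geometric translation exactly is the first step.

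Second, using hypothesis (a), I differentiate under the integral. Multiplication of the integrand by $w_l$, respectively $\bar w_l$, equals $2\partial/\partial h_l^+$, respectively $2\partial/\partial h_l^-$, applied to the Laplace-type integral. Expanding $\exp(J_{ll'}\sigma_l\cdot\sigma_{l'})$ in powers gives
\[
Q_N(z,{\sf J})=\prod_{l<l'}\exp\Big(2J_{ll'}\big(\partial_{h_l^+}\partial_{h_{l'}^-}+\partial_{h_l^-}\partial_{h_{l'}^+}\big)\Big)\,Q_N(z,0).
\]
Hypothesis (a), together with dominated convergence, justifies convergence of the operator series locally uniformly. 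Since the operators commute, it suffices to prove a one-bond lemma: if $F$ is entire in the variables $(h_l^\pm)$, zero-free on the image of $L^N$, and of the growth class produced by (a), then $\exp\big(t(\partial_{h_1^+}\partial_{h_2^-}+\partial_{h_1^-}\partial_{h_2^+})\big)F$ is still zero-free there for every $t\ge0$. Iterating this lemma over the bonds, and using the semigroup property in $t$, then gives the result.

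Third, to prove the one-bond lemma, I reduce to polynomials. Truncating the defining measure to compact support, and approximating by finitely supported measures, yields polynomial approximants whose multi-affine parts can be handled by the Grace--Walsh--Szeg\H{o} coincidence theorem. The key input is that for multi-affine $P$ that is zero-free when each variable lies in an open half-plane $H$, the polynomial $(1+t\,\partial_{a}\partial_{b})P$ is again zero-free. This follows because $\partial_a\partial_b$ acts on the variables $a\in H$ and $b\in \bar H$-conjugate half-plane, exactly the pairing of $h^+$ with $h^-$. One then writes $e^{tX}=\lim(1+tX/n)^n$ and passes to limits via Hurwitz's theorem. At each limit one must exclude $F\equiv0$, which is ruled out by the positivity of $Q_N(x,{\sf J})>0$ at real $x$ near the real part of the domain.

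The main obstacle I expect is this third step. One difficulty is making the half-plane structure of $L$ compatible with a single Grace--Walsh--Szeg\H{o} argument, since $\theta$ may depend on the point. I would first handle a fixed $\theta$, showing zero-freeness on each rotated product domain $\{\operatorname{Re}(e^{i\theta}h^+)>0,\ \operatorname{Re}(e^{-i\theta}h^-)>0\}$, and then take the union over $\theta$; the rotational invariance of the interaction term $w_l\bar w_{l'}$ under $w\mapsto e^{i\theta}w$ makes this legitimate. The other difficulty is controlling growth so that the Hurwitz limits are nondegenerate.
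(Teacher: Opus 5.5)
\textbf{There is a genuine gap, and it cannot be closed, because the statement is false as formulated.} Your Steps 1--3 follow the Lieb--Sokal mechanism; the paper itself offers nothing beyond the citation. For a \emph{fixed} $\theta$ the outline is sound.

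The step that fails is the last one: ``handle a fixed $\theta$, then take the union over $\theta$.'' The bond operator $\partial_{h_l^+}\partial_{h_{l'}^-}+\partial_{h_l^-}\partial_{h_{l'}^+}$ (equivalently $w_l\bar w_{l'}+\bar w_l w_{l'}$) is invariant only when \emph{all} sites are rotated by the \emph{same} angle. So what you actually obtain is non-vanishing on
\[
\bigcup_{\theta}\ \prod_{l=1}^N D_\theta ,\qquad D_\theta=\{\operatorname{Re}(e^{i\theta}h^+)>0,\ \operatorname{Re}(e^{-i\theta}h^-)>0\}.
\]
By contrast, $L^N=\prod_{l}\bigcup_\theta D_\theta$ lets each site choose its own $\theta_l$, and for $N\ge 2$ this set is strictly larger. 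On $\{z_1^2=z_2^2=1\}$ one has $z_1\cdot z_2=\tfrac12(p+p^{-1})$ with $p=h_1^+h_2^-$. In the common-$\theta$ set, $p$ ranges over $\mathds{C}\setminus(-\infty,0]$, so exactly the values $z_1\cdot z_2\notin(-\infty,-1]$ occur. In $L^2$, every value of $z_1\cdot z_2\in\mathds{C}$ occurs. Your one-bond lemma ``zero-free on the image of $L^N$ implies still zero-free'' is exactly what is needed. No Grace--Walsh--Szeg\H{o} argument can give it.

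Counterexample. Take $N=2$, $\mu_2=\chi\otimes\chi$ with $\chi$ arc length on the unit circle, and $J_{12}=J>0$.
\begin{itemize}
\item Hypothesis (a) is trivial. Hypothesis (b) holds because $\widehat\chi(z)=2\pi I_0(\sqrt{z\cdot z})$ vanishes only when $z\cdot z<0$.
\item Expanding $e^{J\cos(\phi_1-\phi_2)}=\sum_k I_k(J)e^{ik(\phi_1-\phi_2)}$ gives, for $z_1=(1,0)$ and any $z_2$ with $z_2\cdot z_2=1$,
\[
Q_2(z_1,z_2,J)=4\pi^2\Big[I_0(J)I_0(1)^2+2\sum_{k\ge1}I_k(J)I_k(1)^2\,T_k(c)\Big],\qquad c=z_1\cdot z_2 .
\]
Here $T_k$ are the Chebyshev polynomials.
\item This is a non-constant entire function of $c$ of order at most $1/3$, so it has zeros $c_*$. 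By the common-frame result they must lie in $(-\infty,-1)$.
\item The point $z_2=(c_*,\sqrt{1-c_*^2})=(-\cosh t,\,i\sinh t)$ satisfies $z_2\cdot z_2=1$, so $(z_1,z_2)\in L^2$ by \eqref{10}. For small $J$ the first zero sits near $J\cosh t\approx 6$.
\end{itemize}

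What your fixed-$\theta$ argument (and Lieb--Sokal's theorem) really proves is non-vanishing whenever a single $U\in O(2)$ puts every $Uz_l$ into $\{\operatorname{Re} z^1>|\operatorname{Im} z^2|\}$. The version over the full product $L^N$ cannot be proved. A secondary point in Step 3: finitely supported measures yield exponential sums, not polynomials. What you need is that $Q_N(\cdot,0)$ be a locally uniform limit of polynomials non-vanishing on the product of right half-planes, and that requires growth information beyond (a)--(b).
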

First, we mention that this statement is about a function of $N$ complex variables. As we show below, $Z_N (h)$, obtained from $Q_N$ by setting all $z_l$ equal $h$, can be presented in the form \eqref{1a}.  
Then our Theorem \ref{1tm} generalizes the corresponding result of \cite{LS} to all even $D$, but for a particular choice of the matrix $(J_{ll'})$. Here, we remark that the numerical results obtained in \cite[Section 2]{Kurtze} point to the validity of our statement for $\chi$ as in \eqref{5} and all $D$. Another particular choice of the aforementioned matrix $\sf J$ is $J_{ll'}= (1+d_{ll'})^{-(1+a)}$, where $d_{ll'}$ is a \emph{hierarchical} distance on $\{1, \dots , N\}$; see, e.g., \cite{BM}. In this case, the validity of the representation \eqref{1a} was proved for all $D\in \mathds{N}$ and all isotropic measures possessing the Lee-Yang property, see \cite{Koz1,Koz2}.

\section{The Proof}

In the sequel, we use the following notations
\begin{equation}
\label{C}
\mathds{C}^{-} = \mathds{C}\setminus (-\infty, 0], \qquad \mathds{C}^{-}_2 =   
\mathds{C}^{-}\times \mathds{C}^{-} \times \mathds{C}.   
\end{equation}
For $z=x+i y\in \mathds{C}^2$, we let $\ell (z) = z^2$. Then $\ell: \mathds{C}^2 \to \mathds{C}$ is obviously surjective. Moreover, 
\begin{equation}
 \label{10}
 \ell^{-1} (\mathds{C}^{-}) = L,
\end{equation}
see \eqref{8}. Indeed, take any $\zeta= \xi + i \eta \in \mathds{C}^{-}$ and find $z\in \mathds{C}^2$ such that $\zeta = \ell(z)$, i.e., such that $x^2 - y^2 = \xi$ and $2x\cdot y = \eta$. If $\eta\neq 0$, 
we take any $x, y\neq 0$ such that $x\cdot y = \eta/2$, and then set $u=y$, which yields 
$(x\cdot u)^2 + (y\cdot u)^2 = \eta^2/4 + (y^2)^2 > (y^2)^2$; hence $x+iy\in L$. For $\eta=0= x\cdot y$, $\xi= x^2 -y^2$ should be strictly positive. Then
we have to take $z$ with $x^2 > y^2$. For such $z$, take $u = \alpha x + \beta y$. Then $u^2 = \alpha^2 x^2 + \beta^2 y^2$. At the same time,
\[
(x\cdot u)^2 + (y\cdot u)^2 = \alpha^2 (x^2)^2 +  \beta^2 (y^2)^2 > \alpha^2 x^2 y^2 +  \beta^2 (y^2)^2 = u^2 y^2,
\]
which yields $\ell^{-1} (\mathds{C}^{-}) \subset L$. Let us prove $\ell (L) \subset \mathds{C}^{-}$. Take any $\zeta  = \xi + i \eta\in \ell (L)$, and let $z\in L$ be its preimage, i.e., $\xi = x^2 -y^2$ and $\eta = 2 x\cdot y$. If $x\cdot y \neq 0$, then $\zeta \in \mathds{C}^{-}$. For $x\cdot y = 0$, we have the following possibilities: (a) $y=0$; (b) $y\neq 0$. For (a), $x^2>0$ since it should be $(x\cdot u)^2> 0$ for some $u$. This yields $\xi>0$, and hence $\zeta \in \mathds{C}^{-}$. For $y\neq 0$, let $u$ be as in \eqref{8}. Write $u =\alpha x+\beta y$. Then $u^2 = \alpha^2 x^2 + \beta^2 y^2$, and 
\[
\alpha^2 (x^2)^2 + \beta^2 (y^2)^2 > \alpha^2 x^2 y^2+ \beta^2 (y^2)^2, 
\]
which implies $x^2 > y^2$, and hence $\zeta \in \mathds{C}^{-}$. 

Now we define $\ell_{2,2}:\mathds{C}^2 \times \mathds{C}^2\to \mathds{C}^3$ by the formula
\begin{equation}
\label{10z} 
\ell_{2,2} (z_1, z_2) = (\zeta_1, \zeta_2, \zeta_{1,2}), \quad \zeta_{1} = z_1^2, \quad \zeta_{2}= z_2^2, \quad \zeta_{1,2}= z_1 \cdot z_2.
\end{equation}
Let us prove that
\begin{equation}
\label{10y}
\ell_{2,2} (L\times L) := \ell_{2,2} (L^2) =  \mathds{C}^{-}_2,
\end{equation}
see \eqref{C}. The inclusion 
\[
\ell_{2,2} (L^2) \subset \mathds{C}^{-}_2
\]
follows by \eqref{10} since $\zeta_k = \ell (z_k)$, $k=1,2$. To prove the opposite inclusion, take $\zeta_k = \xi_k + i \eta_k \in \mathds{C}^{-}$, $k=1,2$, and $\zeta_{1,2}= \xi_{1,2} + i \eta_{1,2}\in \mathds{C}$. We aim to find $(z_1, z_2) \in L^2$ which is a $\ell_{2,2}$-preimages of the triplet $(\zeta_1, \zeta_2, \zeta_{12})$. If both $\eta_k$ are nonzero, then $z_1, z_2\in L$, see above. Thus, it remains to consider the case of $\eta_1=0$ and $\xi_1 >0$. Then the corresponding $z_1$, $z_2$ should satisfy $x_1^2 = \xi_1 + y_1^2$, $x_1 \cdot y_1 =0$, $z_2= x_2 + i y_2 \in L$, and also 
\begin{equation}
 \label{10x}
 x_2^2 - y_2^2 = \xi_2, \quad x_2 \cdot y_2 = \eta_2/2, \quad x_1 \cdot x_2 - y_1\cdot y_2 = \xi_{1,2}, \quad x_1 \cdot y_2 + y_1\cdot x_2 = \eta_{1,2}. 
\end{equation}
Note that $z_2 = x_2 + i a x_2$ is in $L$ for each $a\in \mathds{R}$. Indeed, take $u=x_2$, and get $(u\cdot x_2)^2 + (u\cdot y_2)^2 = (1+ a^2)(x_2^2)^2 > a^2 (x_2^2)^2 = u^2 y_2^2$. For this choice of $y_2$, by the first two equations in \eqref{10x}, we have the following
\begin{equation}
\label{10u}
a = \frac{\eta_2}{\xi_2 + \sqrt{\xi_2^2 + \eta_2^2}}, \qquad x_2^2 = \frac{1}{2}(\xi_2 + \sqrt{\xi_2^2 + \eta_2^2}),
\end{equation}
and also
\begin{equation*}
x_1 \cdot x_2 = \frac{1}{1+a^2} (\xi_{1,2} + a \eta_{1,2}), \qquad y_1 \cdot x_2 = \frac{1}{1+a^2} (\eta_{1,2} - a \xi_{1,2}).
 \end{equation*}
Since $x_1\cdot y_1 = 0$, we can write $x_2 = \alpha x_1 + \beta y_1$, and then get from the latter
\[
\alpha = \frac{\xi_{1,2} + a \eta_{1,2}}{(1+a^2) (\xi_1 + y_1^2)}, \qquad \beta = \frac{\eta_{1,2} - a \xi_{1,2}}{(1+a^2) y_1^2},
\]
with $a$ given in \eqref{10u} and an arbitrary $y_1^2>0$. This includes the case of $\eta_2=0$. Now 
\[
x_2^2= \alpha^2 x_1^2 + \beta^2 y_1^2 = \alpha^2 (\xi_1 + y_1^2) + \beta^2 y_1^2 ,
\]
which should coincide with $x_2^2$ obtained in \eqref{10u}. This yields an equation for $y_1^2$. It has the following positive solution 
\begin{equation*}
 y_1^2 = \frac{1}{2x_2^2}\left[\gamma^2 + \delta^2 - \xi_1x_2^2 + \sqrt{(\gamma^2 + \delta^2 - \xi_1x_2^2)^2 + 4 \delta^2 \xi_1 x_2^2 }
\right],
\end{equation*}
with $\gamma = (\xi_{1,2} + a\eta_{1,2})/ (1+a^2)$ and $\delta = (\eta_{1,2} - a\xi_{1,2})/ (1+a^2)$. In these expressions, $a$ and $x_2^2$ are to be taken from \eqref{10u}.  This completes the proof of \eqref{10y}

Now we turn to proving Theorem \ref{1tm}, which we do by induction in $N$.  Define
\begin{eqnarray}
\label{11} F_N (z_1, z_2) = \int \exp\left(z_1 \cdot \sigma_N + \sum_{l=1}^{N-1} z_2 \cdot \sigma_l + J \sum_{l=1}^{N-1} \sigma_l \cdot \sigma_{l+1} \right) \prod_{l=1}^N \chi(d\sigma_l),    
\end{eqnarray}
where $z_1, z_2\in \mathds{C}^D$ and the integral is taken over $(\mathds{R}^D)^N$. Obviously, $F_N$ is an entire function of two complex variables. For $N\geq 3$, such functions satisfy the following recurrence
\begin{eqnarray}
\label{12}
F_N (z_1, z_2) = \bigg{[}\exp \left(J \mathcal{D}_{z_1} \cdot \mathcal{D}_{z_2} \right)\widehat{\chi} (z_1) F_{N-1} (z_2,z_3) \bigg{]}_{z_3=z_2},
\end{eqnarray}
which can be deduced from the definition in \eqref{11}. Here
\begin{equation}
\label{13}    
\mathcal{D}_{z_1} \cdot \mathcal{D}_{z_2} = \sum_{j=1}^D \frac{\partial^2}{\partial z_{1,j}\partial z_{2,j}}, \qquad z_k =(z_{k,1}, \dots , z_{k,D}) \in \mathds{C}^D, \ \ k=1,2.
\end{equation}
Differential operators, such as that in \eqref{12}, \eqref{13}, are defined using the technique developed in \cite{LS}. For $N=2$, instead of \eqref{12} we have the following formula
\begin{equation}
\label{14}
F_{2} (z_1, z_2) = \exp \left(J \mathcal{D}_{z_1} \cdot \mathcal{D}_{z_2} \right)\widehat{\chi} (z_1) \widehat{\chi} (z_2).
\end{equation}
Now we define the Gram maps $\ell_{m,D} : (\mathds{C}^D)^m \to \mathds{C}^{m(m+1)/2}$, $m=1,2,3$, by the formulas, cf. \eqref{10z},
\begin{equation*}
\ell_{m,D} (z_1, \dots z_m) = \{\zeta_{jk}: 1\leq j\leq k \leq D\}, \qquad \zeta_{jk} = z_j \cdot z_k.  
\end{equation*}
The image $\ell_{m,D}((\mathds{C}^D)^m)$ is determined by the rank of the corresponding Gram matrix $\Gamma^{m,D} = (\Gamma^{m,D}_{jk})_{m\times m}$, $\Gamma^{m,D}_{jk}= z_j \cdot z_k$. Namely, it should verify ${\rm rank}(\Gamma^{m,D}) \leq D$. Then, we have 
\begin{equation}
\label{15z}
\ell_{2,D}((\mathds{C}^D)^2) = \mathds{C}^3, \ {\rm for} \ {\rm all} \ D\geq 2, \quad {\rm and} \quad  \ell_{3,D}((\mathds{C}^D)^3)= \mathds{C}^6,  \ {\rm for} \ {\rm all} \ D\geq 4.
\end{equation}
At the same time, 
\begin{eqnarray}
\label{16}
& & \ell_{3,2}((\mathds{C}^2)^3) :=M \\ & & = \left\{ \{\zeta_{jk}\} : \zeta_1 \zeta_2 \zeta_3 + 2 \zeta_{12}\zeta_{23}\zeta_{13} - \zeta_{1}\zeta^2_{23}- \zeta_{2}\zeta^2_{13} - \zeta_{3}\zeta^2_{12}=0\right\} \subset \mathds{C}^6. \nonumber
\end{eqnarray}
By the strong isotropic property of $\chi$, see Definition \ref{1df} and \eqref{3}, the functions defined in \eqref{11}, \eqref{12}, \eqref{14}, are isotropic, i.e., $F_N(z_1,z_2)= F_N (Uz_1, Uz_2)$, 
$U\in O(D)$. According to the First Main Theorem \cite[Pages 30-33]{Weil}, and its generalizations to entire functions, it follows that there exist entire functions $\Psi_{N,D}$, defined in $\mathds{C}^3$,  such that the following holds
\begin{equation}
\label{17}
 F_N (z_1, z_2) = \Psi_{N,D} (z_1^2, z_2^2, z_1 \cdot z_2), \qquad z, z_1, z_2 \in \mathds{C}^D, \quad N\geq 3.  
\end{equation}
By \eqref{12}, \eqref{14}, and \eqref{17}, one derives the following recursion relations for these functions
\begin{gather}
 \label{18}
 \Psi_{2,D} (\zeta_1, \zeta_2, \zeta_{12}) = \exp\left( J \Delta_{2,D} \right) v (\zeta_1, D) v(\zeta_2, D), \\ \noindent \Psi_{N,D} (\zeta_1, \zeta_2, \zeta_{12}) = \left[ \exp\left( J \Delta_{3,D} \right) v(\zeta_1, D) \Psi_{N-1,D}(\zeta_2, \zeta_3, \zeta_{23}) \right]_{2=3}, \nonumber
\end{gather}
where $v=v_\tau$, see \eqref{3b}, and $2=3$ means $\zeta_{23} = \zeta_3 = \zeta_2$ and $\zeta_{13}= \zeta_{12}$. Furthermore,
\begin{eqnarray}
\label{19}
\Delta_{3,D} & = &  D \partial_{12} +  2 \zeta_1 \partial_1 \partial_{12} + 2 \zeta_2 \partial_2 \partial_{12} +  \zeta_3 \partial_{13} \partial_{23}+ \zeta_{12}(4\partial_1 \partial_2 + \partial_{12}^2 ) \\ \nonumber & + &     \zeta_{13}(2\partial_1 \partial_{23} +  \partial_{12}\partial_{13}) + \zeta_{23}(2\partial_2 \partial_{13} +  \partial_{12}\partial_{23})  , \qquad \partial_{pq} = \frac{\partial}{\partial \zeta_{pq}},
\end{eqnarray}
with the convention $\zeta_{pp}=\zeta_p$ and $\zeta_{pq}= \zeta_{qp}$ for $p>q$. And also
\begin{equation}
\label{20}
\Delta_{2,D} = D \partial_{12} + 2 \zeta_1 \partial_1 \partial_{12} + 2 \zeta_2 \partial_2 \partial_{12} + \zeta_{12}(4 \partial_1 \partial_{2} + \partial_{12}^2).
\end{equation}
Both operators $\Delta_{k,D}$, $k=2,3$,  were calculated to satisfy the conditions
\begin{eqnarray}
\label{20a}
\Delta_{2,D} F(\zeta_1, \zeta_2, \zeta_{12})& =  & \mathcal{D}_1 \cdot \mathcal{D}_2 F(z_1^2, z_2^2, z_1 \cdot z_2), \\ \nonumber
\Delta_{3,D} G(\zeta_1, \zeta_2, \zeta_3, \zeta_{12}, \zeta_{13}, \zeta_{23})& =  & \mathcal{D}_1 \cdot \mathcal{D}_2 G(z_1^2, z_2^2, z_3^2, z_1 \cdot z_2, z_1\cdot z_3 , z_2 \cdot z_3),
\end{eqnarray}
where $F$ and $G$ are appropriate functions, $\mathcal{D}_1 \cdot \mathcal{D}_2 = \mathcal{D}_{z_1} \cdot \mathcal{D}_{z_2}$ is as in \eqref{13}, and $\zeta_{jk} = z_j \cdot z_k$. 

In both cases $k=2,3$ in \eqref{19}, \eqref{20}, the only term in $\Delta_{k,D}$ that contains $\zeta_1$ is $2\zeta_1 \partial_1 \partial_{12}$. Keeping in mind that 
\[
\zeta_1 \partial_1 \partial_{12} \partial_1 = \partial_1 \zeta_1 \partial_1 \partial_{12}  - \partial_1\partial_{12},
\]
we obtain
\begin{equation}
 \label{21}
 \partial_1 \Delta_{k,D-2} = \Delta_{k,D} \partial_1  ,
\end{equation}
which we use to reduce the case of a given even $D$ to $D=2$.

Now we prove the lemma that is the main ingredient of the proof of Theorem \ref{1tm}.
\begin{lemma}
\label{5lm}
For all integer $N\geq 2$ and even $D$, it follows that $\Psi_{N,D}(\zeta_1, \zeta_2, \zeta_{12})\neq 0$ 
whenever $\zeta_1, \zeta_2 \in \mathds{C}^{-}$. 
\end{lemma}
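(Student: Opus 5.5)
Write $D=2+2m$ with $m\in\mathds{N}_0$. The plan is to argue by induction on $N$, handling $D=2$ with Proposition \ref{2pn} and passing to general even $D$ through the intertwining relation \eqref{21}.

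\emph{Step 1: the case $D=2$.} By Definition \ref{1df} and \eqref{4}, $\widehat\chi(z)=v(z^2,2)$ with $v(\cdot,2)\in\mathcal{L}$. Since the factors in \eqref{4} are $1+\gamma_j\zeta$ with $\gamma_j>0$, the function $v(\zeta,2)$ vanishes only on the negative half-line, so it is nonzero on $\mathds{C}^{-}$. By \eqref{10}, the product $\prod_l\widehat\chi(z_l)$ is then nonzero for all $z\in L^N$. Proposition \ref{2pn} applies with $\mu_N=\chi^{\otimes N}$, with $J_{l,l+1}=J$, and with all other couplings zero; condition (a) holds by \eqref{3}. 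It gives $Q_N(z,{\sf J})\neq 0$ on $L^N$. Specialising $z_N=z_1$ and $z_l=z_2$ for $l<N$ yields $F_N(z_1,z_2)\ne0$ for $z_1,z_2\in L$. Now take any $\zeta_1,\zeta_2\in\mathds{C}^-$ and $\zeta_{12}\in\mathds{C}$. By \eqref{10y} there is $(z_1,z_2)\in L^2$ with $\ell_{2,2}(z_1,z_2)=(\zeta_1,\zeta_2,\zeta_{12})$. By \eqref{17} this gives $\Psi_{N,2}(\zeta_1,\zeta_2,\zeta_{12})=F_N(z_1,z_2)\neq0$, with \eqref{18} covering $N=2$. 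This settles $D=2$ for every $N$.

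\emph{Step 2: lifting to $D=2+2m$.} By \eqref{3c}, $v(\cdot,D)=(4\pi)^m\partial^m v(\cdot,2)$. Using \eqref{21} iteratively, $\partial_1^m\Delta_{k,2}=\Delta_{k,D}\partial_1^m$, hence $\partial_1^m\exp(J\Delta_{k,2})=\exp(J\Delta_{k,D})\partial_1^m$. I would like to show by induction on $N$ that
\[
\Psi_{N,D}=c\,\partial_1^{m}\,\Phi_N ,
\]
where $\Phi_N$ is the analogue of $\Psi_{N,2}$ built from the recursion \eqref{18}, but where the "inner" spins keep dimension $D$ and only the endpoint variable $\zeta_1$ is lowered. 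Equivalently, I would track how many $\partial$-operators act on each variable. The inner factors $\Psi_{N-1,D}(\zeta_2,\ldots)$ also carry derivatives in their own first variable; these turn into derivatives in $\zeta_2$ after the substitution $2=3$. Since $\Delta_{3,D}$ involves $\zeta_2$ through the term $2\zeta_2\partial_2\partial_{12}$, a commutation analogous to \eqref{21} for the index $2$ should also hold. That would allow me to write $\Psi_{N,D}$ as $\partial_1^m\partial_2^{m'}$ applied to a function of the type covered by Step 1, namely a $D=2$ partition function in which the endpoint spin measure has been replaced. The cleanest version of this is to regard $\Psi_{N,D}(\zeta_1,\zeta_2,\zeta_{12})$ as a partial derivative in $(\zeta_1,\zeta_2)$ of a two-dimensional multi-field partition function whose single-site transforms $v(\cdot,2)$ lie in $\mathcal{L}$.

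\emph{Step 3: derivatives preserve non-vanishing on $\mathds{C}^-$.} This is the main obstacle. By \eqref{L}, differentiation preserves $\mathcal{L}$ in one variable. What I need is a two-variable analogue: if $G(\zeta_1,\zeta_2,\zeta_{12})$ has no zeros with $\zeta_1,\zeta_2\in\mathds{C}^-$, then neither does $\partial_{1}G$. My first attempt would be to fix $\zeta_2$ and $\zeta_{12}$ and show that $\zeta_1\mapsto G$ is a Laguerre-type function of order less than $2$ whose zeros lie in $(-\infty,0]$. Then Gauss–Lucas together with Laguerre's theorem (the mechanism behind \eqref{L}) puts the zeros of $\partial_1G$ in the same set. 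The growth bound follows from \eqref{3}. The difficulty is that for complex $\zeta_2$ the function need not be real, so the zero set may be a general subset of $(-\infty,0]$ rather than the zero set of a real Laguerre function. I would handle this with the Hermite–Biehler/Gauss–Lucas version of the argument for functions of genus $0$ that are nonvanishing on a convex region: the complement of $(-\infty,0]$ is not convex, but $\mathds{C}^-$ is the union of half-planes $\{{\rm Re}(e^{-i\theta}\zeta)>0\}$ with $|\theta|<\pi/2$. Gauss–Lucas for genus-$0$ entire functions says that the zeros of the derivative lie in the closed convex hull of the zeros, which is contained in $(-\infty,0]$. This would give the claim.
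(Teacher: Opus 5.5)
Your Step~1 is a correct and more direct treatment of $D=2$. Step~3 is the same mechanism the paper uses via \eqref{L}; your genus-zero Gauss--Lucas version is, if anything, more careful about complex constants.

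\textbf{The gap is in Step~2.} Reducing $\Psi_{N,D}$ to derivatives of a two-dimensional product-measure chain ``of the type covered by Step~1'' cannot work. By \eqref{21}, each derivative you commute through $\exp(J\Delta_{k,D})$ lowers that operator's dimension parameter by $2$, and the identity holds equally for $\partial_2$. This single parameter is shared by both ends of the bond. Already for $N=2$, writing both factors via \eqref{3c} gives
\[
\Psi_{2,D}=(4\pi)^{2m}\,\partial_1^m\partial_2^m \exp\left(J\Delta_{2,2-2m}\right) v(\zeta_1,2)\,v(\zeta_2,2).
\]
For $m\geq 1$, $\Delta_{2,2-2m}$ is not $\mathcal{D}_1\cdot\mathcal{D}_2$ for vectors in any $\mathds{C}^{D'}$ with $D'\geq 1$, so Proposition~\ref{2pn} says nothing about it. Splitting $\mathds{R}^D$ into $m+1$ copies of $\mathds{R}^2$ does not rescue this either. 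The single-site transform $v\bigl(\sum_i (z^{(i)})^2, D\bigr)$ can vanish with all $z^{(i)}\in L$, so hypothesis (b) fails.

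Hence only one end of the new bond can be lowered. That is your first formulation $\Psi_{N,D}=c\,\partial_1^m\Phi_N$, i.e.\ the paper's \eqref{24}. But in $\Phi_N$ it is the whole rest of the chain, not the endpoint, that stays $D$-dimensional. So Step~1 does not give $\Phi_N\neq 0$ on $\mathds{C}^-_2$, and your proposal has no other argument for it.

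\textbf{The missing idea} is to apply Proposition~\ref{2pn} to a \emph{non-product} measure and let the induction in $N$ supply hypothesis (b).
\begin{itemize}
\item Consider $\exp(J\Delta_{3,2})\,v(\zeta_1,2)\,\Psi_{N-1,D}(\zeta_2,\zeta_3,\zeta_{23})$ before setting $2=3$, evaluated at the Gram coordinates of $z_1,z_2,z_3\in\mathds{C}^2$. It equals $Q_3(z_1,z_2,z_3,{\sf J})$ with $J_{12}=J$ and all other couplings zero.
\item Here $\mu_3$ is the product of the two-dimensional single-spin measure with a positive measure on $(\mathds{R}^2)^2$. That second factor is the marginal of the (unnormalized) law of $(\sigma_{N-1},\sum_{l<N-1}\sigma_l)$ for the $D$-dimensional $(N-1)$-chain.
\item Then $Q_3(z,0)=v(z_1^2,2)\,\Psi_{N-1,D}(z_2^2,z_3^2,z_2\cdot z_3)$. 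This is nonzero on $L^3$ by \eqref{10} and the induction hypothesis for $\Psi_{N-1,D}$ \emph{with $\zeta_{23}$ arbitrary}. That is why the lemma is stated in three variables.
\item Next set $z_3=z_2$, use \eqref{10y} to reach every point of $\mathds{C}^-_2$, and finish with your Step~3 in $\zeta_1$.
\item The base case works the same way with $\Phi_2=\exp(J\Delta_{2,2})\,v(\zeta_1,2)\,v(\zeta_2,D)$. This needs $v(\cdot,D)$ to be zero-free on $\mathds{C}^-$, which follows from Lemma~\ref{1lm}.
\end{itemize}

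As written, your induction on $N$ never uses its inductive hypothesis at dimension $D$, and that is exactly where the argument has to close.
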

\begin{proof}
The proof will be carried out by induction in $N$. Thus, we start by considering $N=2$, see \eqref{18} and \eqref{20}. If $D=2$, for $\zeta_1, \zeta_2 \in \mathds{C}^{-}$, and any $\zeta_{12}$, by \eqref{14}, \eqref{10y}, and Proposition \ref{2pn} we have 
\begin{equation*}
\Psi_{2,2} (\zeta_1, \zeta_2 , \zeta_{12})  = F_2 (z_1, z_2) \neq 0,
\end{equation*}
where $z_1,z_2 \in L$ are such that $\ell_{2,2} (z_1, z_2 ) = (\zeta_1, \zeta_2 , \zeta_{12})$. Indeed, by \eqref{9}, \eqref{14}, and \eqref{20a}, 
it follows that
\begin{gather}
\label{22a}
    F_2 (z_1, z_2) = Q_2 (z_1, z_2 ,{\sf J}) =\exp(J \mathcal{D}_1 \cdot \mathcal{D}_2) Q_2 (z_1, z_2, 0), \\ Q_2 (z_1, z_2, 0) = 
\widehat{\chi}(z_1) \widehat{\chi}(z_2). \nonumber
\end{gather}
Let now $D= 2+ 2m$. 
By \eqref{3c} and Lemma \ref{1lm}, and then by \eqref{21}, we get
\begin{gather}
\label{23}
    \Psi_{2,D} (\zeta_1, \zeta_2, \zeta_{12}) = (4\pi \partial_1)^m \Phi_2  (\zeta_1, \zeta_2, \zeta_{12}), \\ \Phi_2  (\zeta_1, \zeta_2, \zeta_{12}) := \exp(J\Delta_{2,2}) v(\zeta_1, 2) v(\zeta_2, D).\nonumber
\end{gather}
Similarly as in \eqref{22a}, for $(\zeta_1, \zeta_2, \zeta_{12}) \in \mathds{C}^{-}_2$, we have $\Phi_2  (\zeta_1, \zeta_2, \zeta_{12})= Q_2 (z_1, z_2, {\sf J})$, with $\ell_{2,2} (z_1, z_2 ) = (\zeta_1, \zeta_2 , \zeta_{12})$, $(z_1, z_2)\in L^2$, and, this time,
\[
Q_2 (z_1, z_2, 0) = v(z_1^2, 2) v(z_2^2, D). 
\]
Since $ v(\cdot, 2),  v(\cdot, D) \in \mathcal{L}$, by Proposition \ref{2pn}, we then get $\Phi_{2} (\zeta_1, \zeta_2, \zeta_{12})$ for $\zeta_k \in \mathds{C}^{-}$, $k=1,2$. 
For fixed $\zeta_2 \in \mathds{C}^{-}$ and $\zeta_{12}\in \mathds{C}$, the function $\zeta_1 \mapsto \Phi_2  (\zeta_1, \zeta_2, \zeta_{12})$ is in $\mathcal{L}$. By \eqref{L} and \eqref{23}, this yields 
$\Psi_{2,D} (\zeta_1, \zeta_2, \zeta_{12}) \neq 0$  whenever $(\zeta_1, \zeta_2, \zeta_{12})\in \mathds{C}^{-}_2$. This completes the proof of the lemma for $N=2$.

Now we assume that $\Psi_{N-1,D}$ has the stated property. By the second line in \eqref{18}, similarly as in \eqref{23} we get
\begin{eqnarray}
\label{24}
\Psi_{N,D} (\zeta_1, \zeta_2, \zeta_{12}) & = & \left[ (4\pi \partial_1)^m \exp\left( J \Delta_{3,2} \right) v(\zeta_1, 2) \Psi_{N-1,D} (\zeta_2, \zeta_3, \zeta_{23})\right]_{2=3}\\ \nonumber 
& = & \left[ (4\pi \partial_1)^m \widehat{\Phi} (\zeta_1, \zeta_2, \zeta_3,\zeta_{12}, \zeta_{13}, \zeta_{23})\right]_{2=3}\\
& = & (4\pi \partial_1)^m \Phi_3 (\zeta_1, \zeta_2, \zeta_{12}), \nonumber
\end{eqnarray}
where 
\begin{eqnarray}
\label{25}
\Phi_{3} (\zeta_1, \zeta_2, \zeta_{12}) & = & \left[\widehat{\Phi} (\zeta_1, \zeta_2, \zeta_3,\zeta_{12}, \zeta_{13}, \zeta_{23})\right]_{2=3}.
\end{eqnarray}
By \eqref{24}, \eqref{20a}, and \eqref{9}, we obtain that, for $\zeta_{jk} = z_j \cdot z_k$, $1\leq j\leq k \leq 3$, the following holds
\[
\widehat{\Phi} (\zeta_1, \zeta_2, \zeta_3,\zeta_{12}, \zeta_{13}, \zeta_{23}) = Q_{3} (z_1, z_2, z_3, {\sf J})
\]
with ${\sf J} = (J_{jk})$ such that $J_{12} = J$ and $J_{jk} = 0$ otherwise. At the same time, by \eqref{24}, it follows that
\[
Q_{3} (z_1, z_2, z_3, 0) =  v(z_1, 2) \Psi_{N-1,D} (z_2^2, z_3^2, z_{2} \cdot z_{3}).
\]
By the inductive assumption and the fact $v(\cdot, 2)\in \mathcal{L}$, it follows that $Q_{3} (z_1, z_2, z_3, 0)\neq 0$ for  
$(z_1, z_2, z_3)\in L^3$. Then by 
Proposition \ref{2pn}, $\widehat{\Phi}$ does not vanish on $\ell_{3,2}(L^3)\subset M$, see \eqref{16}.   
Since $M \cap \{\{\zeta_{jk}\}: 2=3\}$ is isomorphic to $\mathds{C}^3$, by \eqref{25}, $\Phi_{3}$ does not vanish if $\zeta_k\in \mathds{C}^{-}$, $k=1,2$, which is true for any $\zeta_{12}\in \mathds{C}$. Similarly as above, we fix $\zeta_2 \in \mathds{C}^{-}$ and $\zeta_{12}\in \mathds{C}$, and then conclude that the function $\zeta_1 \mapsto \Phi_2  (\zeta_1, \zeta_2, \zeta_{12})$ is in $\mathcal{L}$. Therefore, 
\begin{equation}
 \label{26}
 \Psi_{N,D}(\zeta_1, \zeta_2, \zeta_{12})\neq 0, \quad
{\rm whenever} \quad  \zeta_1, \zeta_2 \in \mathds{C}^{-}.
\end{equation}
This completes the proof.
\end{proof}
{\it Proof of Theorem \ref{1tm}.} 
Define $\varphi_{N,D} (\zeta) =\Psi_{N,D}(\zeta, \zeta, \zeta)$. By \eqref{26}, it follows that $\varphi_{N,D} \in \mathcal{L}$ for all $N$ and all even $D$. At the same time, by \eqref{11} and \eqref{17}, we have 
\[
Z_N (z) = F_{N} (z,z) = \Psi_{N,D} (z^2, z^2, z^2) = \varphi_{N,D} (z^2),
\]
which by \eqref{4a} yields the proof of the theorem. The case of $m>0$, see \eqref{4a}, is excluded by the fact that $Z_N (0) >0$, while 
$\alpha =0$ follows by growth restrictions. \hfill $\Box$ 

Let us now make some concluding remarks. 
\begin{itemize}
    \item[(a)] In Theorem \ref{1tm}, we assume that the measure $\chi$ is strongly isotropic -- not just isotropic. The reason is to obtain the possibility to deal with 
the family of such measures for all $\mathds{D}$, $D\geq 2$, related to each other by the same $\tau$, cf. \eqref{3b} and \eqref{3c}.
\item[(b)] The choice of $\mathds{Z}$ as the underlying graph is caused by the fact that in the recurrence in \eqref{12} we can deal only with a single ``offspring" of vertex $N$, which is $N-1$. This restriction comes from the condition ${\rm rank}(\Gamma^{m,D}) \leq D$, see \eqref{15z}, which should hold for $D=2$, cf. Proposition \ref{2pn}.

\end{itemize}


\begin{thebibliography}{ll}

\bibitem{Albev} S. Albeverio, Y. Kondratiev, Y. Kozitsky and M.R\"ockner, The Statistical Mechanics of Quantum Lattice Systems: a Path Integral Approach (Vol. 8). European Mathematical Society. 2009.


\bibitem{Bleher} P. M. Bleher, Integration of functions in a space with complex number of dimensions, {\it Theor. Math. Phys.} {\bf 50}, 243--251 (1982). 
 
\bibitem{BM} P. M. Bleher and P. Major, Critical phenomena and universal exponents in statistical physics. On Dyson's hierarchical model, {\it Annals Probab.} {\bf 15},  431--477 (1987).

\bibitem{Borcea} J. Borcea and P. Br\"and\'en, The Lee-Yang and P\'olya-Schur programs. I. Linear operators preserving stability, {\it Inventiones Mathematicae} {\bf 177} 541--569 (2009), II. Theory of stable polynomials and applications, {\it Commun. Pure Appl. Math.} {\bf 62} 1595--1631 (2009).

\bibitem{Borcea1} J. Borcea and P. Br\"and\'en, P\'olya-Schur master theorems for circular domains and their boundaries, {\it Annals of Mathematics} 465--492 (2009).

 \bibitem{Br} P. Br\"and\'en and J Huh, Lorentzian polynomials, {\it Annals of Mathematics} {\bf 192}, 821--891 (2020). 

\bibitem{DN} F. Dunlop and C. M. Newman, Multicomponent field theories and classical
rotators, {\it Commun. Math. Phys.} {\bf 44} 223--235 (1975). 

\bibitem{Frol} J. Fr\"ohlich and P.-M. Rodriguez, Some applications of the Lee-Yang theorem, {\it J. Math. Phys.} {\bf 53}, 095218 (2012). 

\bibitem{Glimm} J. Glimm and A. Jaffe, Correlation inequalities and the Lee-Yang theorem, in: Quantum Physics: A Functional Integral Point of View. New York, NY: Springer New York, 1987. 56-72.


\bibitem{Iliev} L. Iliev,  Laguerre Entire Functions. Publishing House of the Bulgarian Academy of Sciences, Sofia,  1987.


\bibitem{Koz} Y. Kozitsky and N. Melnik, Random vectors with the Lee-Yang property, {\it Theor. Math. Phys.} {\bf 78}, 127--134 (1989). 

\bibitem{Koz1} Y. Kozitsky, Hierarchical model of a vector ferromagnet. Self-similar block-spin distributions and the Lee-Yang theorem, {\it Rep. Math. Phys.} {\bf 26}, 429--225 (1988).

\bibitem{Koz2} Y. Kozitsky, Hierarchical ferromagnetic vector spin model possessing the Lee-Yang property. Thermodynamic limit at the critical point and above, {\it J. Statist. Phys.} {\bf 87} 799--820  (1997).
 \bibitem{Kurtze} D. A. Kurtze, The Yang-Lee edge singularity in one-dimensional 
Ising and N-vector models, {\it J. Statist. Phys.} {\bf 30}, 15--35 (1983). 

\bibitem{LY}  T. D. Lee and C. N. Yang, Statistical theory of equations of state and phase transitions. II. Lattice 
gas and Ising model, {\it Phys. Rev.} {\bf 87} 410--419  (1952). 

\bibitem{LS} E. H. Lieb and A. D. Sokal, A general Lee-Yang theorem for one-component and
multicomponent ferromagnets, {\it Commun. Math. Phys.} {\bf  80}, 153--179 (1981).

\bibitem{Newman} C. M. Newman, Inequalities for Ising models and field theories
which obey the Lee-Yang theorem, {\it Commun. Math. Phys.} {\bf  41}, 1--9 (1975).

\bibitem{Newm} C. M. Newman, Fourier transforms with only real zeros, {\it Proceedings of the American Mathematical Society} {\bf 61(2)}, 245--251 (1976).

 
\bibitem{Simon} B. Simon, Phase Transitions in the Theory of Lattice Gases, Cambridge University Press, 2026. 

\bibitem{Weil} H. Weil, The Classical Groups, their Invariants and Representations, Princeton University Press, 1964. 

\end{thebibliography}
\end{document}